\newcommand{\tuplem}[1]{{#1}_1\;{#1}_2\;\cdots{#1}_m}
\newcommand{\tuplen}[1]{{#1}_1\;{#1}_2\;\cdots{#1}_n}
\newcommand{\SC}{\mathcal{C}}
\newcommand{\SB}{\mathcal{B}}
\newcommand{\T}[1]{\textrm{Tr}(#1)}
\newcommand{\tr}{\textrm{Tr}(\SC)}
\newcommand{\trp}{\textrm{Tr}(\SC^{\perp})}
\newcommand{\subs}{\textrm{SS}(\SC)}
\newcommand{\im}{\textrm{Im}_\SB (\SC)}
\newcommand{\I}[1]{\textrm{Im}_\SB (#1)}
\newcommand{\qm}{\textrm{GF}(q^m)}
\newcommand{\q}{\textrm{GF}(q)}
\newcommand{\two}{\textrm{GF}(2)}
\newcommand{\twom}{\textrm{GF}(2^m)}
\newtheorem{prop}{Proposition}
\newtheorem{example}{Example}
\begin{document}
\title{Reed-Solomon Subcodes with Nontrivial Traces: Distance Properties and Soft-Decision Decoding}
\author{Andrew~Thangaraj,~\IEEEmembership{Member,~IEEE,}
        and~Safitha~J~Raj%
\thanks{A. Thangaraj and S. J. Raj are with the Department of Electrical Engineering, Indian Institute of Technology Madras, Chennai 600036, India e-mail: andrew@iitm.ac.in.}}
\maketitle
\begin{abstract}
Reed-Solomon (RS) codes over GF$(2^m)$ have traditionally been the most popular non-binary codes in almost all practical applications. The distance properties of RS codes result in excellent performance under hard-decision bounded-distance decoding. However, efficient and implementable soft decoding for high-rate (about 0.9) RS codes over large fields (GF(256), say) continues to remain a subject of research with a promise of further coding gains. In this work, our objective is to propose and investigate $2^m$-ary codes with non-trivial binary trace codes as an alternative to RS codes. We derive bounds on the rate of a $2^m$-ary code with a non-trivial binary trace code. Then we construct certain subcodes of RS codes over GF($2^m$) that have a non-trivial binary trace with distances and rates meeting the derived bounds. The properties of these subcodes are studied and low-complexity hard-decision and soft-decision decoders are proposed. The decoders are analyzed, and their performance is compared with that of comparable RS codes. Our results suggest that these subcodes of RS codes could be viable alternatives for RS codes in applications.
\end{abstract}
\begin{IEEEkeywords}
Reed-Solomon codes, soft-decision decoding, trace codes, bounds on codes.
\end{IEEEkeywords}
\section{Introduction}
Reed-Solomon (RS) codes \cite{Reed:1960qf} are the most prevalent and commonly used codes today with applications ranging from satellite communications to computer drives. RS codes are popular, in theory, for their elegant algebraic construction. In practice, RS codes can be encoded and decoded with manageable complexity and high speed. RS codes continue to remain objects of active research with most recent interest being in list and soft-decision decoding \cite{Guruswami:1999df}\cite{Koetter:2003ph}\cite{Jiang:2008cr}.

Efficient soft decoding of RS codes has traditionally been a problem of importance. Early methods for soft decoding of RS codes included Chase decoding  and Generalized Minimum Distance (GMD) decoding \cite{Forney:1966qv}. Other methods for soft decoding RS codes include \cite{Vardy:1994fq}\cite{Ponnampalam:2002jy}. Recently, the Koetter-Vardy algorithm \cite{Koetter:2003ph}, the belief-propagation-based iterative algorithm \cite{Jiang:2004pz} and bit-level GMD algorithm \cite{Jiang:2008cr} have been proposed. Common themes in the above methods include (1) an additional coding gain of less than 1 dB, (2) an increase in complexity with size of the field, and (3) an increase in complexity for higher coding gain. As a result, efficient soft decoders are not readily available for high rate (rate 0.9 and above) RS codes over large fields (GF(256) and larger) in order to achieve the $\approx2-3$ dB of possible coding gain. The best coding gain achieved for the (255,239) RS code over GF(256) appears to be about 0.7 dB (at a block error rate of $10^{-3}$) over the additive white Gaussian noise (AWGN) channel. This is obtained with the bit-level GMD algorithm \cite{Jiang:2008cr}, which is a version of the GMD algorithm with bit-level erasures and Guruswami-Sudan list decoding.

In this work, we provide an approach for improving coding gain in the high-rate large-field case by exploiting the properties of images and traces of codes. Codes over GF$(2^m)$ are typically expanded into a binary image (using a basis for GF$(2^m)$ over GF(2)) before actual use in a physical channel. Hence, the binary image of codes over GF($2^m$) deserve to be studied closely. The binary trace code is closely associated with the image, since every image of a codeword over GF$(2^m)$ can be shown to be the concatenation of $m$ codewords from the trace code. However, the algebraic structure and distance properties of the image (and trace, to an extent) have proved to be difficult to characterize over the years. For instance, determining the basis that results in an image of highest minimum distance \cite{MacWilliams:1977rw} continues to be an open problem. Moreover, the exact practical utility (in terms of dBs of coding gain) of studying the properties of the trace and image have not yet been established concretely. One of the basic contributions of this work is to establish a possible utility,  in terms of coding gain, for studying the trace and image.

Specifically, in this work, we study codes over GF$(2^m)$ whose traces over GF(2) are non-trivial codes (not the identity code, for instance) with a minimum distance greater than 1. We characterize the structure of codes with a non-trivial trace and demonstrate properties that could be useful in practice. We derive some bounds on the minimum distance of the code and its trace using ideas of generalized Hamming weights \cite{Wei:1991if}. These bounds allow us to study the constraints on the minimum distance of the original code imposed by a non-trivial trace code.

On the practical side, we provide Reed-Solomon-like constructions for codes with a non-trivial trace. Basically, these are subcodes of RS codes whose traces are binary BCH codes. Suitable non-consecutive zeros are added to the set of zeros of a parent RS code to enable the trace to be a BCH code. We show that these codes, which we call Sub-Reed-Solomon (SRS) codes, meet the minimum distance bounds derived for codes with a non-trivial trace. Hence, SRS codes have best possible distance properties. In addition, our analysis (using list decoders) shows that a large fraction of errors beyond half the minimum distance are correctable. Hence, the performance of SRS codes is comparable to that of a traditional RS code at the same rate.

The main utility of SRS codes is that they are more amenable to efficient soft-decision decoding because of the trace structure. Since the image of a $2^m$-ary code is a concatenation of its binary trace, a soft decoder for the trace can be efficiently used to process soft input for the image. Using this idea, we propose simple soft decoders for SRS codes. Our simulations show that the proposed soft decoders for high-rate ($>0.9$) SRS codes over large fields (GF(256)) perform 0.4-0.5 dB better than other soft decoders of traditional RS codes at the same rate. A coding gain of 0.7-0.8 dB is possible over traditional bounded-distance decoders with low-complexity soft decoders, which involve efficient soft processing followed by traditional bounded-distance decoding. The complexity of obtaining 0.7 dB of coding gain with a (255,239) SRS code over GF(256) is lesser than that of the bit-level GMD algorithm running on the (255,239) RS code. Hence, our results suggest that SRS codes could be competent alternatives to RS codes in certain situations.

The rest of this article is organized as follows. Section \ref{sec:preliminaries} introduces the required notation and definitions for codes with a non-trivial trace. The basic structure of codes and their traces is shown in Section \ref{sec:structure}, and the bounds on minimum distance is discussed in Section \ref{sec:minim-dist-bounds-1}. Section \ref{sec:reed-solom-subc} introduces the construction of SRS codes and derives interesting properties of SRS codes. Hard-decision list decoders for SRS codes and their error-correcting properties are studied in Section \ref{sec:decoders-srs-codes}. Section \ref{sec:soft-input-decoders} discusses three different soft-input decoders for SRS codes. Finally, concluding remarks are made in Section \ref{sec:conclusion}.
\section{Notations and Basic Definitions}
\label{sec:preliminaries}
See \cite{MacWilliams:1977rw} for more details on the definitions and preliminary results in this section. A finite field $\qm$ ($q$: power of prime) is an $m$-dimensional vector space over $\q$. Trace of an element $\alpha\in \qm$ is a linear mapping $\textrm{Tr}:\qm\rightarrow \q$ defined by $\T{\alpha}=\sum_{i=0}^{m-1}\alpha^{q^i}$. The trace of a vector $[a_1\;a_2\cdots]$ is $[\T{a_1}\;\T{a_2}\cdots]$. If $\SC$ is a code over $\qm$, the trace of $\SC$, denoted $\tr$, consists of the traces of all codewords of $\SC$. In general, $\tr$ is a $(n,\geq k,\leq d)$ code over $\q$. The subfield subcode of $\SC$, denoted $\subs$, is defined as $\SC\bigcap(\q)^n$. $\subs$ contains the codewords of the $q^m$-ary code $\SC$ that are actually over $\q$. By Delsarte's theorem, we have 
\begin{equation}
  \label{eq:1}
  (\subs)^{\perp}=\trp.  
\end{equation}

A set of $m$ elements of $\qm$ linearly independent over $\q$ form a basis for this vector space. Let $\SB=\{\tuplem{\beta}\}$ be a basis for $\qm$ over $\q$. Let $\SB'=\{\tuplem{\beta'}\}$ be the dual basis of $\SB$ such that $\T{\alpha_i\beta_j}=\delta_{ij}$. Each element $\alpha\in \qm$ can be expanded as $\alpha=\sum^{m}_{i=1}a_i\beta_i$, where $a_i=\T{\alpha \beta'_i}$. The image of $\alpha\in \qm$ with respect to $\SB$ is the vector $\I{\alpha}=[\tuplem{a}]$ over $\q$. The image of $\SC$ with respect to $\SB$, denoted $\im$, consists of the images (with respect to $\SB$) of all codewords of $\mathcal{C}$. Image of an $(n,k,d)$ linear code over $\qm$ will be an $(nm,km,\geq d)$ linear code over $\q$.

For most cases in this paper, we restrict ourselves to $\twom$ for ease of description and practicality. Almost all our results have straight-forward extensions to $\qm$. Also, in the context of this paper, an $(n,k)$ code is said to be nontrivial if $1\le k\le n-1$.
\section{Structure}
\label{sec:structure}
Let $\SC$ be a linear code of length $n$ over $\qm$, and let $\SB=\{\tuplem{\beta}\}$ be a basis for $\qm$ over $\q$. Let $\SB'=\{\tuplem{\beta'}\}$ be the dual basis of $\SB$. 

Consider a codeword $\mathbf{c}=[c_{1}\;c_{2}\cdots c_{n}]\in\SC$. The image of $\mathbf{c}$ is the vector $[\I{c_1}\;\I{c_2}\cdots\I{c_n}]$, where $\I{c_i}=[\T{\beta'_1c_i}\;\T{\beta'_2c_i}\cdots\T{\beta'_mc_i}]$. For convenience, we view the image as a $n\times m$ matrix with the $i$-th row being $\I{c_i}$. 
\begin{prop}
Each column of an image matrix in $\im$ is a codeword of $\tr$.
\label{sec:preliminary-results-1}
\end{prop}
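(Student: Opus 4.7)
The plan is to reduce the claim directly to the definition of $\tr$ together with the $\qm$-linearity of $\SC$. First I would fix a column index $j\in\{1,2,\ldots,m\}$ and write the $j$-th column of the image matrix explicitly. By the definition of $\I{c_i}$ recalled just before the proposition, the $(i,j)$ entry of the image matrix is $\T{\beta'_j c_i}$, so the $j$-th column is the vector $[\T{\beta'_j c_1}\;\T{\beta'_j c_2}\cdots\T{\beta'_j c_n}]$.

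Next I would recognize this vector as the trace (applied componentwise) of the $\qm$-vector $\beta'_j\mathbf{c}=[\beta'_j c_1\;\beta'_j c_2\cdots\beta'_j c_n]$. The key observation is that, because $\SC$ is a linear code over $\qm$ and $\beta'_j\in\qm$, scaling $\mathbf{c}\in\SC$ by $\beta'_j$ yields another codeword of $\SC$, i.e.\ $\beta'_j\mathbf{c}\in\SC$. Applying the definition of $\tr$ (the componentwise trace of codewords of $\SC$) then places the $j$-th column in $\tr$, which is exactly what we want.

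Since $j$ was arbitrary, every column of the image matrix belongs to $\tr$, completing the argument. The only thing that could be called a ``step'' rather than a definitional unpacking is invoking closure of $\SC$ under multiplication by the basis element $\beta'_j\in\qm$; this is immediate from $\qm$-linearity, so I do not anticipate any real obstacle. The proof is essentially a one-line computation, and I would present it as such.
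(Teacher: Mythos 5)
Your proposal is correct and follows exactly the paper's own argument: identify the $j$-th column as the componentwise trace of $\beta'_j\mathbf{c}$, invoke $\qm$-linearity to get $\beta'_j\mathbf{c}\in\SC$, and conclude from the definition of $\tr$. No differences worth noting.
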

\begin{IEEEproof}
The $j$-th column of the image matrix will be $[\T{c_1\beta'_j}\;\T{c_2 \beta'_j}\cdots\T{c_n \beta'_j}]$.
$$\mathbf{c}\in\mathcal{C}\ \Rightarrow\ \beta'_j \mathbf{c}\in\mathcal{C}.$$
Hence the $j$-th column will belong to the trace of $\mathcal{C}$.
\end{IEEEproof}
The above property establishes the importance and utility of a non-trivial trace of a $q^m$-ary code. Basically, the image is a concatenation of codewords from the trace code with certain restrictions imposed by the overall code. As suggested by the concatenation, we let $\I{\mathbf{c}}=[\T{\beta'_1\mathbf{c}}\;\T{\beta'_2\mathbf{c}}\cdots\T{\beta'_m\mathbf{c}}]$, which is a permuted version of the image of $\mathbf{c}$. The image of $\SC$ is then defined as $\im=\{\I{\mathbf{c}}:\mathbf{c}\in\SC\}$.

The trace code imposes a structure on the party-check matrix of a $q^m$-ary code with a non-trivial trace.
\begin{prop}
Let $\SC$ be an $(n,k)$ code over $\qm$. Let $\tr$ be an $(n,k')$ code over $\q$ with a $n-k'\times n$ parity-check matrix $H'$. Then there exists a $n-k\times n$ parity-check matrix $H$ for $\SC$ of the form $$H=\begin{bmatrix}H'\\H''\end{bmatrix}.$$
\label{sec:preliminary-results-2}
\end{prop}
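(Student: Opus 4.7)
The plan is to show that every row of $H'$ already lies in the $\qm$-dual $\SC^\perp$, so that $H'$ can be extended to a full parity-check matrix of $\SC$. Applying Delsarte's identity~\eqref{eq:1} to $\SC^\perp$ in place of $\SC$ gives $(\textrm{SS}(\SC^\perp))^\perp = \T{\SC^{\perp\perp}} = \tr$, and taking $\q$-duals of both sides yields
\[
  (\tr)^\perp \;=\; \textrm{SS}(\SC^\perp) \;=\; \SC^\perp\cap(\q)^n.
\]
By definition, the rows of $H'$ span $(\tr)^\perp$ over $\q$, so they are elements of $\SC^\perp$, i.e., each row of $H'$ is a valid parity-check row for $\SC$.

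Next I would verify that the $n-k'$ rows of $H'$ remain $\qm$-linearly independent when their ambient scalars are enlarged from $\q$ to $\qm$. This is the standard fact that the rank of a matrix with entries in a subfield is preserved under field extension: any hypothetical $\qm$-relation among the rows, once its coefficients are expanded in a basis $\{\tuplem{\beta}\}$ of $\qm/\q$, decomposes into $\q$-linear relations among the same rows, each of which must be trivial. An immediate consequence is that $n-k'\le \dim_{\qm}\SC^\perp = n-k$, i.e., $k'\ge k$, which ensures that $H'$ fits as the top block of an $(n-k)\times n$ matrix.

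Finally, because the rows of $H'$ form an $(n-k')$-dimensional $\qm$-subspace of the $(n-k)$-dimensional space $\SC^\perp$, I can append $k'-k$ further vectors to complete them to a $\qm$-basis of $\SC^\perp$; collect these extra vectors as the rows of $H''$. The stacked matrix $H=\begin{bmatrix}H'\\H''\end{bmatrix}$ then has $n-k$ rows, $\qm$-rank $n-k$, and row space equal to $\SC^\perp$, making it a parity-check matrix for $\SC$ of the prescribed form. The only nontrivial step is the use of Delsarte's theorem to locate $(\tr)^\perp$ inside $\SC^\perp$; the field-extension rank argument and the basis extension are routine linear algebra.
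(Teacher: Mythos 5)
Your proposal is correct and follows essentially the same route as the paper's proof: apply Delsarte's theorem to $\SC^{\perp}$ to conclude that the rows of $H'$ lie in $\textrm{SS}(\SC^{\perp})\subseteq\SC^{\perp}$, note that $H'$ retains full rank over $\qm$, and extend to a parity-check matrix for $\SC$. The paper merely compresses the field-extension rank argument and the basis completion into one sentence, which you have usefully spelled out.
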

\begin{IEEEproof}
Since the rows of $H'$ belong to $\tr^{\perp}$, by Delsarte's theorem ~(\ref{eq:1}), the rows of $H'$ belong to $\textrm{SS}(\SC^{\perp})\subseteq \SC^{\perp}$. Since $H'$ is a full-rank matrix over $\q$ (and hence over $\qm$), the result follows.
\end{IEEEproof}
The matrix $H''$ will, in general, have entries from $\qm$. Starting from the parity-check matrix of Proposition \ref{sec:preliminary-results-2}, we can obtain a parity-check matrix for $\im$ with the form shown in Fig. \ref{fig:1}.
\begin{figure}
\begin{center}
\begin{tabular}{ccccc}
  $H'$&$\mathbf{0}$&$\mathbf{0}$&$\cdots$&$\mathbf{0}$\\
  $\mathbf{0}$&$H'$&$\mathbf{0}$&$\cdots$&$\mathbf{0}$\\
  $\vdots$&$\vdots$&$\vdots$&$\vdots$&$\vdots$\\
  $\mathbf{0}$&$\mathbf{0}$&$\mathbf{0}$&$\cdots$&$H'$\\
  \hline
  &&$H''_1$&&\\
  &&$H''_2$&&\\
  &&$\vdots$&&\\
  &&$H''_{k'-k}$&&
\end{tabular}
\caption{Structure of parity-check matrix for the image.}
\label{fig:1}
\end{center}
\end{figure}
In the matrix of Fig. \ref{fig:1}, 
$$H''_i=\begin{bmatrix}
\I{\beta_i\mathbf{h}''_1}\\
\I{\beta_i\mathbf{h}''_2}\\
\vdots\\
\I{\beta_i\mathbf{h}''_{k'-k}}
\end{bmatrix},\;1\leq i\leq m,
$$
where $\mathbf{h}''_j$, $1\leq j\leq k'-k$ denotes the $j$-th row of $H''$. It is clear that a nontrivial trace code imposes a useful structure on the parity-check matrix of the image. In this work, we exploit this structure for efficient soft decoding.
\section{Minimum Distance Bounds}
\label{sec:minim-dist-bounds-1}
We begin with a well-known basic result on the minimum distances of a code, its image and subfield subcode.
\begin{prop}
If $d$, $d_{ss}$ and $d_i$ are the minimum distances of $\SC$, $\subs$ and $\im$, respectively, we have $d\leq d_i\leq d_{ss}$.
\label{sec:preliminary-results-3}
\end{prop}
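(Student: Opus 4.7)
The plan is to establish the two inequalities separately, both following from the definition of the image map with respect to the basis $\SB$ and from the fact that $\SC$ is $\qm$-linear.

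For the left inequality $d\leq d_i$, I would observe that $\mb{c}\mapsto\I{\mb{c}}$ is an injective $\q$-linear map on $\SC$, since the expansion of any element of $\qm$ in the basis $\SB$ is unique (so $\I{\alpha}=0$ forces $\alpha=0$). Given any nonzero codeword $\mb{c}\in\SC$, each zero coordinate $c_i$ contributes an all-zero block of length $m$ to $\I{\mb{c}}$, while each nonzero $c_i$ yields a block $\I{c_i}$ of weight at least one. Hence $\textrm{wt}(\I{\mb{c}})\geq \textrm{wt}(\mb{c})\geq d$, and taking the minimum over nonzero $\mb{c}$ gives $d_i\geq d$.

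For the right inequality $d_i\leq d_{ss}$, the naive attempt, namely taking the image of a minimum-weight codeword $\mb{c}\in\subs$ directly, does not immediately give a weight-$d_{ss}$ image, since for $c\in\q$ the block $\I{c}=c\,[\textrm{Tr}(\beta'_1),\ldots,\textrm{Tr}(\beta'_m)]$ can have weight up to $m$. The trick is to consider $\beta_1\mb{c}$, which still lies in $\SC$ because $\SC$ is $\qm$-linear. For each position $i$, since $c_i\in\q$, the expansion coefficients of $c_i\beta_1$ are $a_j=\textrm{Tr}(c_i\beta_1\beta'_j)=c_i\,\delta_{1j}$ by the dual-basis relation; therefore $\I{c_i\beta_1}=[c_i,0,\ldots,0]$, and each nonzero $c_i$ contributes exactly one nonzero coordinate to $\I{\beta_1\mb{c}}$. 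This exhibits a codeword of $\im$ of weight exactly $d_{ss}$, proving $d_i\leq d_{ss}$.

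The only mildly subtle step is the second one: recognizing that a codeword of $\subs$ by itself need not produce a small image, and that multiplying by the basis element $\beta_1$ concentrates each nonzero contribution into a single $\q$-coordinate per position through the dual-basis orthogonality $\textrm{Tr}(\beta_i\beta'_j)=\delta_{ij}$. Once that observation is made, both halves of the statement follow from routine weight counting.
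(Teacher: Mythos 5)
Your proof is correct and follows essentially the same route as the paper: the first inequality by noting each nonzero symbol contributes at least one nonzero bit to its image block, and the second by taking a minimum-weight codeword $\mb{c}\in\subs$ and passing to $\beta_1\mb{c}$, whose image has weight exactly $\textrm{wt}(\mb{c})$ via the dual-basis relation $\T{\beta_1\beta'_j}=\delta_{1j}$. You simply spell out in more detail the steps the paper compresses.
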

\begin{IEEEproof}
Clearly, $d_i\geq d$. Suppose $\mathbf{c}=[\tuplen{c}]\in\subs\subseteq\SC$ is a minimum weight codeword of $\subs$. Since $c_i\in \q$, image of $c_i\beta_1$ is 
$$[c_i\T{\beta_1\beta'_1}\ c_i\T{\beta_1\beta'_2}\ \cdots\ c_i\T{\beta_1\beta'_m}]=[c_i0\cdots0].$$
Hence, weight of the image of $\beta_1\mathbf{c}\in\SC$ is equal to the weight of $\mathbf{c}$, and the result follows.
\end{IEEEproof}
\subsection{Generalized Hamming weight bound}
The standard Singleton bound states that $d\leq n-k+1$ for a $(n,k,d)$ code $\SC$ over $\qm$. If we further require that the trace $\tr$ is a $(n,k',d')$ code, with $k'\geq k$, $d'\leq d$, the additional structure in the parity-check matrix results in a stronger bound on $d$. 

The notion of generalized Hamming weights (GHWs), introduced in \cite{Wei:1991if}, is used in the bound. Let $D$ be a subcode of a length-$n$ binary code $C$. The support of $D$, denoted $\chi(D)$, is defined as
$$\chi(D)=\{i:1\leq i\leq n,\exists[\tuplen{c}]\in D:c_i\ne0\}.$$
The set $\chi(D)$ is the set of positions where not all codewords in $D$ are zero. The $r$-th Hamming weight of $C$, denoted $d_r(C)$, is defined as 
$$d_r(C)=\min\{|\chi(D)|: D\text{ is a }(n,r)\text{ subcode of }C\}.$$  
In words, the $r$-th Hamming weight of $C$ is the minimum support of a $r$-dimensional subcode of $C$. 
\begin{prop}
Let $\SC$ be a $(n,k,d)$ code over $\qm$ with $\tr$ being a $(n,k')$ code ($k'\geq k$). Then,
$$d\leq d_{k'-k+1}(\tr)$$
\label{sec:minim-dist-bounds}
\end{prop}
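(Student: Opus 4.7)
The plan is to exploit the parity-check matrix structure established in Proposition \ref{sec:preliminary-results-2} and cut down dimensions by restricting codewords to the support of a minimum-support subcode of $\tr$. Set $r=k'-k+1$ and let $D\subseteq \tr$ be an $r$-dimensional subcode with support $S=\chi(D)$ of size $|S|=d_r(\tr)$. The goal is to exhibit a nonzero codeword of $\SC$ whose support is contained in $S$; any such codeword will then have weight at most $|S|=d_r(\tr)$.

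Take the parity-check matrix $H=\bigl[\begin{smallmatrix}H'\\ H''\end{smallmatrix}\bigr]$ of $\SC$ given by Proposition \ref{sec:preliminary-results-2}, where $H'$ is an $(n-k')\times n$ parity-check matrix of $\tr$ with entries in $\q$, and $H''$ is $(k'-k)\times n$ over $\qm$. Let $H|_S$ denote the column restriction to the coordinates in $S$, and consider the subspace of $\SC$ consisting of codewords supported on $S$; this subspace has dimension $|S|-\operatorname{rank}_{\qm}(H|_S)$. By subadditivity,
\begin{equation*}
\operatorname{rank}_{\qm}(H|_S)\;\leq\;\operatorname{rank}_{\qm}(H'|_S)+(k'-k).
\end{equation*}
Since $H'$ has entries in $\q$, $\operatorname{rank}_{\qm}(H'|_S)=\operatorname{rank}_{\q}(H'|_S)$, so it remains to bound the $\q$-rank of $H'|_S$.

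For this, I would observe that a vector $v\in\q^n$ supported on $S$ lies in $\tr$ iff $H'|_S\,v|_S^T=0$; hence the set of codewords of $\tr$ with support contained in $S$ forms a $\q$-subspace of dimension $|S|-\operatorname{rank}_{\q}(H'|_S)$. Because $D$ itself sits inside this subspace, its dimension is at least $r$, yielding $\operatorname{rank}_{\q}(H'|_S)\leq |S|-r$. Plugging back gives
\begin{equation*}
\operatorname{rank}_{\qm}(H|_S)\;\leq\;|S|-r+(k'-k)\;=\;|S|-1,
\end{equation*}
so the space of codewords of $\SC$ supported on $S$ has $\qm$-dimension at least one. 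Picking any nonzero element produces a codeword of weight at most $|S|=d_{k'-k+1}(\tr)$, establishing the claim.

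The only delicate point is the rank-invariance $\operatorname{rank}_{\qm}(H'|_S)=\operatorname{rank}_{\q}(H'|_S)$, which is standard (a matrix over $\q$ has the same row/column rank over any extension field). Everything else is a dimension count combining Proposition \ref{sec:preliminary-results-2} with the defining property of the $r$-th generalized Hamming weight; no appeal to the specific structure of $H''$ is needed beyond its row count.
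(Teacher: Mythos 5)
Your proof is correct and takes essentially the same route as the paper's: restrict the structured parity-check matrix $H=\left[\begin{smallmatrix}H'\\H''\end{smallmatrix}\right]$ from Proposition \ref{sec:preliminary-results-2} to the support $\chi(D)$ of a minimum-support $(k'-k+1)$-dimensional subcode $D$ of $\tr$, and show that the restricted matrix has rank at most $|\chi(D)|-1$, so a nonzero codeword of $\SC$ lives on that support. The paper carries out this rank computation by explicit row reduction and exhibits the kernel vector of the $(k'-k)\times(k'-k+1)$ block $P''_D$, whereas you use rank subadditivity plus a dimension count (and are slightly more careful in writing $\operatorname{rank}(H'|_S)\leq|S|-r$ rather than asserting equality), but the underlying argument is the same.
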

\begin{proof}
Let
$$H=\begin{bmatrix}H'\\H''\end{bmatrix}$$
be a parity-check matrix for $\SC$ as per Proposition \ref{sec:preliminary-results-2}. Let $D$ be a $(k'-k+1)$-dimensional subcode of $\tr$ with support $\chi(D)$ such that $|\chi(D)|=d_{k'-k+1}(\tr)$. Let
$$H_D=\begin{bmatrix}H'_D\\H''_D\end{bmatrix}$$
be the submatrix of $H$ formed by the columns indexed by $\chi(D)$. The matrix $H'_D$, which is a parity-check matrix for $D$, has rank $r_D=|\chi(D)|-(k'-k+1)$. By row operations $H'_D$ can be reduced to the form 
$$\begin{bmatrix}
I_{r_D}&P'_D\\
\mathbf{0}&\mathbf{0}
\end{bmatrix},$$ 
where $I_{r_D}$ is the $r_D\times r_D$ identity matrix, $P_D$ is a $r_D\times (k'-k+1)$ matrix, and $\mathbf{0}$ represents all-zero matrices of suitable size. Therefore, by row operations, $H_D$ can be reduced to the form
$$\begin{bmatrix}
I_{r_D}&P'_D\\
\mathbf{0}&\mathbf{0}\\
\mathbf{0}&P''_D
\end{bmatrix},$$
where $P''_D$ is a $(k'-k)\times (k'-k+1)$ matrix with entries from $\qm$. Consider a $(k'-k+1)$-length vector $\mathbf{v}$ over $\qm$ such that $P''_D\mathbf{v}^T=\mathbf{0}$. From the form of $H_D$ above, it is clear that there exists a length-$r_D$ vector $\mathbf{u}$ such that $H_D[\mathbf{u}\;\mathbf{v}]^T=\mathbf{0}$. Hence, the vector with $[\mathbf{u}\;\mathbf{v}]$ in the positions $\chi(D)$ and zeroes for the remaining positions is a codeword of $\SC$ with weight less than or equal to $|\chi(D)|=d_{k'-k+1}(\tr)$. 
\end{proof}

For a $(n,k)$ code $C$ with a parity-check matrix $H$, another quantity closely related to generalized Hamming weights is the following, which is called equivocation with $s$ erasures ($0\leq s\leq n$) following \cite{Wei:1991if}:
\begin{equation}
  \label{eq:equi}
  \Delta_s(C)=\min_{I\subseteq\{1,2,\cdots,n\},|I|=s}\text{rank}(H_I),
\end{equation}
where $H_I$ denotes the submatrix of $H$ formed by the columns indexed by $I$. A careful reworking of Corollary A (Appendix) in \cite{Wei:1991if} shows that $\Delta_s(C)=\Delta$, $0\leq\Delta\leq n-k$ for which
\begin{equation}
  \label{eq:eqrange}
  n-d_{n-k-\Delta+1}(C^{\perp})<s\leq n-d_{n-k-\Delta}(C^{\perp})
\end{equation}
holds. Hence, the equivocations of a code $C$ can be computed using the generalized Hamming weights of the dual code $C^{\perp}$.
\begin{prop}
Suppose that $\SC$ is a $(n,k,d)$ code over $\qm$ with $\tr$ being a $(n,k',d')$ code over $\q$. Let $H=\left[\frac{H'}{H''}\right]$ be a parity-check matrix for $\SC$ such that $H'$ is a parity-check matrix for $\tr$. Then,
$$d\geq d''+\Delta_d(\tr),$$
where $d''$ is the minimum distance of the $(n,n-(k'-k))$ code over $\qm$ with parity-check matrix $H''$. 
\label{sec:gener-hamm-weight}
\end{prop}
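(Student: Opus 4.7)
The plan is to take a minimum-weight codeword $\mathbf{c} \in \SC$ with support $I$ of size $d$ and produce from it a nonzero vector of $\ker H''$ of Hamming weight at most $d - \Delta_d(\tr)$. This would yield $d'' \leq d - \Delta_d(\tr)$, and rearranging gives the claimed inequality.

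Since $H\mathbf{c}^T = 0$, the codeword $\mathbf{c}$ satisfies both $H'\mathbf{c}^T = 0$ and $H''\mathbf{c}^T = 0$. Set $r = \mathrm{rank}(H'_I)$. By the definition of equivocation as the minimum column-rank of $H'$ over size-$d$ column subsets (applied with $s = d$), we have $r \geq \Delta_d(\tr)$. Choose $I_1 \subseteq I$ of size $r$ indexing a maximal linearly independent set of columns of $H'_I$, and set $I_2 = I \setminus I_1$. For each $j \in I_2$, there is a unique vector $\mathbf{e}^{(j)} \in (\qm)^n$ supported on $\{j\} \cup I_1$, with $j$-th coordinate equal to $1$, such that $H'(\mathbf{e}^{(j)})^T = 0$. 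These $d-r$ vectors form a basis for the space $V_I = \ker H' \cap \{\mathbf{v} : \mathrm{supp}(\mathbf{v}) \subseteq I\}$, so we may write $\mathbf{c} = \sum_{j \in I_2} c_j \mathbf{e}^{(j)}$, with every $c_j \neq 0$ because $\mathbf{c}$ has full support on $I$.

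Let $\mathbf{s}^{(j)} = H''(\mathbf{e}^{(j)})^T \in (\qm)^{k'-k}$. The identity $H''\mathbf{c}^T = 0$ becomes $\sum_{j \in I_2} c_j \mathbf{s}^{(j)} = 0$, a linear dependence on the $d - r$ syndromes with every coefficient nonzero. The crux is to promote this to a nonzero vector $\mathbf{w}$ in $\ker H''$ of weight at most $d - r$. The intended route is to take a suitable linear combination of the $\mathbf{e}^{(j)}$ and modify it by an element of $\ker H''$ so that the $r$ coordinates lying in $I_1$ are annihilated; each annihilation is a single linear condition, and together with the requirement of lying in $\ker H''$ and a nondegeneracy condition (so that the result is nonzero and supported in $I_2$), these define a linear system whose solvability follows from a dimension count exploiting both the freedom in $V_I$ and the rank constraints on the $\mathbf{s}^{(j)}$. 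Once such $\mathbf{w}$ is constructed, $\mathrm{wt}(\mathbf{w}) \leq |I_2| = d - r \leq d - \Delta_d(\tr)$, and $\mathbf{w} \in \ker H''$ gives $d'' \leq d - \Delta_d(\tr)$.

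The main obstacle will be this final construction: obtaining the annihilating adjustment inside $\ker H''$ requires balancing the $r$ cancellation constraints against the available degrees of freedom, and guaranteeing nontriviality of $\mathbf{w}$ relies on the minimum-distance property of $\SC$ together with the all-nonzero form of the dependency $\sum c_j \mathbf{s}^{(j)} = 0$. A careful extremal or pigeonhole argument on these linear systems should close the gap and complete the bound.
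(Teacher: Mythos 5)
Your setup is sound up through the decomposition $\mathbf{c}=\sum_{j\in I_2}c_j\mathbf{e}^{(j)}$ and the relation $\sum_{j\in I_2}c_j\mathbf{s}^{(j)}=0$ (the rank inequality, the basis of $V_I$, and the nonvanishing of the $c_j$ are all correct). But the argument stops exactly where the content of the proposition lies: you never actually exhibit a nonzero vector of weight at most $d-r$ in $\ker H''$; you only assert that a ``careful extremal or pigeonhole argument should close the gap.'' That is a genuine gap, and the route you sketch does not obviously close it. Adjusting a combination of the $\mathbf{e}^{(j)}$ by an element of $\ker H''$ so as to annihilate the $I_1$-coordinates gives no control on weight, because the correcting element will in general have support outside $I$; and if you force the correction to stay supported in $I$ and in $\ker H'$ as well, the result lies in $\SC$ with support in $I_2$, hence has weight less than $d$ and must be zero by minimality of $d$. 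So the natural dimension counts shut the very door you need open, and the difficulty traces back to your insistence on working with the kernel of the original, prescribed $H''$.

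The paper's proof resolves this by operating on the matrix rather than on the vector. Since the columns of $H'$ indexed by $I_1$ are linearly independent ($|I_1|=r=\mathrm{rank}(H'_I)$), one may add $\qm$-linear combinations of the rows of $H'$ to the rows of $H''$ so that the columns of $H''$ indexed by $I_1$ become zero; this row reduction leaves $\left[\frac{H'}{H''}\right]$ a parity-check matrix for $\SC$ with $H'$ untouched. For the reduced $H''$, each syndrome $\mathbf{s}^{(j)}=H''(\mathbf{e}^{(j)})^T$ collapses to the $j$-th column of $H''$ (the $I_1$-part of $\mathbf{e}^{(j)}$ now contributes nothing), so your identity $\sum_{j\in I_2}c_j\mathbf{s}^{(j)}=0$ says precisely that the vector supported on $I_2$ with entries $c_j$ lies in $\ker H''$; it is nonzero of weight $d-r\le d-\Delta_d(\tr)$, whence $d''\le d-\Delta_d(\tr)$. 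In short: instead of modifying the vector to kill the $I_1$-coordinates (which costs weight), modify $H''$ so that those coordinates become irrelevant. The price, worth noting, is that the bound is thereby established for this row-reduced completion $H''$ of $H'$ rather than for an arbitrarily prescribed one---which is precisely the obstruction your version of the argument runs into.
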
  
\begin{proof}
Suppose $\mathbf{c}\in\SC$ is a weight-$d$ codeword with nonzero positions $I\subseteq\{1,2,\cdots,n\}$, $|I|=d$. Let $H_I=\left[\frac{H'_I}{H''_I}\right]$ be the $n-k\times d$ submatrix of $H$ with columns indexed by $I$. 

By definition of $\Delta_d(\tr)$, we have that $\text{rank}(H'_I)\geq\Delta_d(\tr)$. By arguments similar to the proof of Proposition \ref{sec:minim-dist-bounds}, we see that row operations will result in $\Delta_d(\tr)$ columns of $H''$ becoming zero, and $d''\leq d-\Delta_d(\tr)$. 
\end{proof}

Since $H''$ has entries from $\qm$, we could meet the Singleton bound and have $d''=k'-k+1$ for several range of parameters. Assuming that the Singleton bound is met for the code with parity-check matrix $H''$, combining Propositions \ref{sec:minim-dist-bounds} and \ref{sec:gener-hamm-weight}, we get
$$k'-k+1+\Delta_d(\tr)\leq d\leq d_{k'-k+1}(\tr).$$
Hence, the generalized Hamming weights of $\tr$ and $\tr^{\perp}$ play a significant role in upper and lower bounding the minimum distance of a code with a non-trivial trace. 
\subsection{Sphere packing bound}
For the sphere packing bound, we restrict ourselves to the binary case and set $q=2$ for simplicity. As before, the image of a vector $\mathbf{v}=[v_{1}\;v_{2}\cdots v_{n}]\in\twom^n$ is represented as a $n\times m$ binary matrix $\I{\mathbf{v}}$ whose $i$-th row is $\I{v_i}=[v_{i1}\;v_{i2}\cdots v_{im}]$, $v_{ij}\in\two$. The $j$-th column of $\I{\mathbf{v}}$ is denoted $\bar{v}_j$.

Let $\SC$ be a $(n,k,d)$ code over $\twom$ with $t=\lfloor\frac{d-1}{2}\rfloor$. Let $d'$ be the minimum distance of $\tr$ with $t'=\lfloor\frac{d'-1}{2}\rfloor$. The sphere around a codeword $\mathbf{c}\in\SC$ is the following:
\begin{equation}
  \label{eq:sphere}
  S(\mathbf{c})=\{\mathbf{v}:d_H(\mathbf{v},\mathbf{c})\leq t\}\bigcup_{1\leq j\leq m}\{\mathbf{v}:d_H(\bar{v}_j,\bar{c}_j)\leq t'\}.
\end{equation}
As in standard sphere packing bounds, the sphere includes vectors in $\twom^n$ that are within a Hamming distance of $t$ from the codeword $\mathbf{c}$. In addition, vectors whose images contain columns that are within a Hamming distance of $t'$ from the corresponding column of the image of $\mathbf{c}$ are included in the sphere. 

Let $S_1=\{\mathbf{v}:d_H(\mathbf{v},\mathbf{c})\leq t\}$ and $S_2=\{\mathbf{v}:d_H(\bar{v}_j,\bar{c}_j)\leq t' (1\leq j\leq m)\}$. We see that $|S_1|=\sum^{t}_{l=0}\binom{n}{l}(2^m-1)^l$ and $|S_2|=\left(\sum^{t'}_{l=0}\binom{n}{l}\right)^m$. For $t\geq mt'$, $|S_1\cap S_2|=|S_2|$. 

For $t<mt'$, some additional combinatorics is involved in the computation of $|S_1\cap S_2|$. Let $\mathbf{v}_i$, $1\leq i\leq m$, be uniformly and independently chosen $n$-bit vectors of weight at most $t'$. Let the random variable $X_i=\text{wt}(\mathbf{v}_1\text{ OR }\mathbf{v}_2\text{ OR }\cdots\mathbf{v}_i)$, where $\text{wt}$ denotes Hamming weight and $\text{OR}$ denotes bitwise binary OR. The probability mass function (PMF) of $X_i$ can be iteratively found starting with $X_1$ and ending in $X_m$. Finally, 
$$|S_1\cap S_2|=|S_2|\text{Pr}\{X_m\leq t\}.$$ 

In either case, we let $|S_1\cap S_2|=\lambda_{nm}(t,t')|S_2|$, where $\lambda_{nm}(t,t')=1$ for $t\geq mt'$ and $\lambda_{nm}(t,t')=\text{Pr}\{X_m\leq t\}$ otherwise. Hence, $|S(\mathbf{c})|=(1-\lambda_{nm}(t,t'))|S_2|+|S_1|$, and the sphere-packing bound becomes
\begin{equation}
  \label{eq:spbound}
  2^{n-k}\geq(1-\lambda_{nm}(t,t'))\left(\sum^{t'}_{l=0}\binom{n}{l}\right)^m+\sum^{t}_{l=0}\binom{n}{l}(2^m-1)^l.
\end{equation}
Asymptotically, $\left(\sum^{t'}_{l=0}\binom{n}{l}\right)^m$ tends to $2^{nmH_2(t'/n)}$ and $\sum^{t}_{l=0}\binom{n}{l}(2^m-1)^l$ tends to $2^{nmH_{2^m}(t/n)}$, where $H_q(x)=x\log_q(q-1)-x\log_q x-(1-x)\log_q(1-x)$ is the entropy function. If $H_2(t'/n)>H_{2^m}(t/n)$, the first term dominates the bound, else the second term is dominant. Note that $\lambda_{nm}(t,t')=1$ when $t\geq mt'$ and the first term becomes zero.
\subsection{Existential bounds}
Bounds analogous to the Gilbert-Varshamov (GV) bound can be obtained for codes with a non-trivial trace. The traditional GV bound states that a $(n,k,d)$ code over $\twom$ exists whenever
$$(2^m)^{n-k}\geq\sum_{i=0}^{d-2}\binom{n-1}{i}(2^m-1)^i.$$
The RHS above is an upper bound on the number of $(n-k)$-tuples over $\twom$ that cannot be chosen as the $n$-th column of a parity-check matrix for a $(n,k,d)$ code. The $i$-th term in the RHS is the number of linear combinations of $i$ of the already-chosen $n-1$ columns. 

When the $(n,k,d)$ code has a $(n,k',d')$ trace, the form of the parity-check matrix results in a different upper bound on the tuples to be avoided in the $n$-th column. In this case, the parity-check matrix has the form $H=\left[\frac{H'}{H''}\right]$, where $H'$ is a $n-k'\times n$ binary matrix and $H''$ is a $k'-k\times n$ matrix over $\twom$. Let us suppose that $n-1$ columns of $H$ have been constructed and we attempt to add the $n$-th column. For $H'$, the constraint to maintain a distance $d'$ is the following:
\begin{equation}
  \label{eq:gvb}
2^{n-k'}\geq \sum_{i=0}^{d'-2}\binom{n-1}{i}.  
\end{equation}
Suppose the $n$-th column of $H'$, denoted $\bar{h'}_n$, has been chosen satisfying the above constraint. The number of $(k'-k)$-tuples over $\twom$ to be avoided in the $n$-th column of $H''$ can be bounded as follows. Consider a set $I\subseteq\{1,2,\cdots,n\}$, $|I|=i$. Let $H_I=\left[\frac{H'_I}{H''_I}\right]$ be the submatrix of $H$ with columns indexed by $I$. For $1\leq i\leq d'-2$, no linear combination of the columns of $H'_I$ can result in $\bar{h'}_n$ by \eqref{eq:gvb}. For $d'-1\leq i\leq d-2$, since the column rank of $H'_I$ is at least $d'-1$, a maximum of $(2^m-1)^{i-(d'-1)}$ linear combinations can result in $\bar{h'}_n$. Hence, an $n$-th column can be added for $H''$, whenever
\begin{equation}
  \label{eq:gvm}
  (2^m)^{k'-k}\geq\sum_{i=d'-1}^{d-2}\binom{n-1}{m}(2^m-1)^{i-(d'-1)}.
\end{equation}
Combining ~\eqref{eq:gvm} and ~\eqref{eq:gvb}, we get
\begin{multline}
  \label{eq:gv}
  n-k+(d'-1)\log_{2^m}(2^m-1)\geq \log_2\left(\sum_{i=0}^{d'-2}\binom{n-1}{i}\right)+\\
  \log_{2^m}\left(\sum_{i=d'-1}^{d-2}\binom{n-1}{m}(2^m-1)^i\right).
\end{multline}
An asymptotic version (large $m$, $n$) of the above bound, with $R=k/n$, is the following:
\begin{equation}
  \label{eq:gva}
  1-R+\frac{d'}{n}\geq H_{2^m}\left(\frac{d}{n}\right)+H_2\left(\frac{d'}{n}\right).
\end{equation}
\subsection{Illustration of bounds}
The GHW bound is difficult to compute in the general case, since strong bounds for generalized Hamming weights do not exist when the dimension grows with blocklength. In fact, the Singleton bound is seen to be tight in this case \cite{Helleseth:1995si}. 

For our purposes in this work, we compute the bounds discussed in this section for the case when (1) $n=255$, $d'=3$, $k'=247$ and (2) $n=255$, $d'=4$, $k'=246$ over GF(256). For this case, the corresponding trace codes are the (1) $(255,247,3)$ binary Hamming code and the (2) $(255,246,4)$ even-weight subcode of the Hamming code. For the Hamming code, generalized Hamming weights have been found exactly in \cite{Wei:1991if}. Let $C$ be the $(n=2^m-1,2^m-m-1,3)$ Hamming code. The generalized Hamming weights of $C$ are given by the following ordered set: 
\begin{multline}
  \label{eq:ghwh}
\{d_r(C):1\leq r\leq 2^m-m-1\}=\\
\{1,2,\cdots,n\}\setminus \{2^i:0\leq i< m\}.   
\end{multline}
If $C'$ is the even-weight subcode of $C$, the dual of $C'$ is the punctured Reed-Muller code. Hence,
\begin{multline*}
\{d_r(C'):1\leq r\leq 2^m-m-2\}=\\
\{2,\cdots,n\}\setminus \{1+2^i:0\leq i< m\}.
\end{multline*}

The bounds for the above two cases are shown in Fig. \ref{fig:bounds}. These bounds hold for any code over GF(256) with $n=255$ and a binary trace of minimum distance $d'=3$ and $d'=4$. The marks 'x' represent the standard Singleton bound $d\leq n-k+1$ without considering trace. The circle marks with legend 'SRS' represent points that are achieved by certain Subcodes of Reed-Solomon (SRS) codes that will be constructed in the later sections of this article. We see that the generalized Hamming weight bound is close to the standard Singleton bound. Hence, codes that achieve the generalized Hamming weight bound could be called 'trace-MDS'. The existential lower bound and the sphere-packing upper bound are shown as dotted lines in the figure.
\begin{figure}[htb]
  \centering
  \subfigure[$d'=3$]{
  \includegraphics[width=0.485\textwidth]{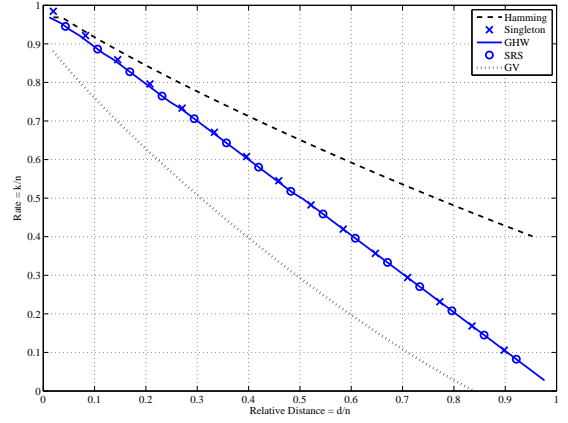}}
  \subfigure[$d'=4$]{
  \includegraphics[width=0.485\textwidth]{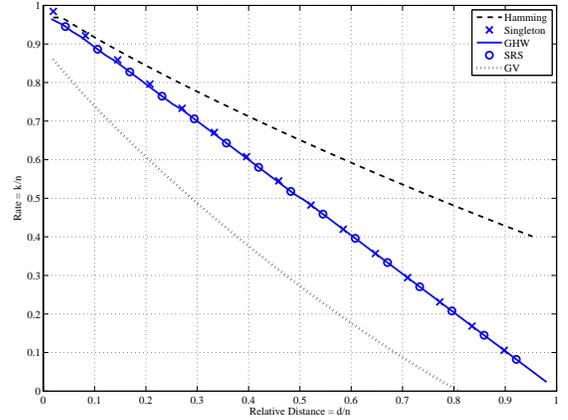}}
  \caption{Illustration of bounds for $n=255$, $m=8$.}
  \label{fig:bounds}
\end{figure}

From Fig. \ref{fig:bounds}, we see that the trace code does not significantly affect the rate when the minimum distance of the code ($d$) is reasonably larger than the minimum distance of the trace ($d'$). As can be expected, the upper and lower bounds are not very tight. This is because of the loose bounds on the combinatorial quantities in the derivation of the bound. These bounds could be improved in future work.

The points corresponding to SRS codes shown in Fig. \ref{fig:bounds} are seen to correspond to optimal codes over GF(256) with a trace code of minimum distance 3 and 4.

\section{Sub Reed-Solomon Codes}
\label{sec:reed-solom-subc}
In this section, we discuss the construction and basic properties of sub Reed-Solomon (SRS) codes with a nontrivial trace. We restrict ourselves to images of $\text{GF}(2^m)$ over GF(2) for simplicity. The construction easily extends to the general case. We will see that the SRS codes are trace-MDS in some cases, where the GHW bound can be evaluated. 
\subsection{Construction}
Let $\alpha$ be a primitive element of $\twom$. Let $\SC_{z}(t)$ denote the $(n,n-2t,2t+1)$ $t$-error correcting Reed-Solomon (RS) code of length $n=2^m-1$ with zero set be $Z_{rs}=\{z,z+1,\cdots,z+2t-1\}\mod n$. The generator polynomial of $\SC_z(t)$ is $\prod_{i=0}^{2t-1}(x+\alpha^{z+i})$. Typically, we let $z=0$ or $z=1$.

A SRS code $\SC_{zz'}(t,t')$ (for $t'\leq t$) is a subcode of $\SC_{z}(t)$ with zero set $Z_{rs}\cup Z_{bch}$, where $Z_{bch}$ is the zero set of a $t'$-error-correcting binary BCH code i.e.
$$Z_{bch}=C_{z'}\cup C_{z'+1}\cup\cdots\cup C_{z'+2t'-1},$$
where $C_i$ denotes the cyclotomic coset of $i$ modulo $n$ under multiplication by 2 and $z'\in Z_{rs}$. In the simplest examples, we choose $z=z'=1$ and denote $\SC_{11}(t,t')$ as simply $\SC(t,t')$. In some cases, we pick $z=0$ and $z'=1$.
\begin{example}
Let $\alpha$ be a primitive element of GF(256).
\begin{enumerate}
\item $\SC(8,1)$ is the subcode of the 8-error-correcting (255, 239, 17) RS code ($\SC(8)$) with zeros $\{1,2,\cdots,16,32,64,128\}$. $\SC(8,1)$ is a (255, 236, $\geq17$) code.
\item $\SC(8,2)$ is the subcode of the 8-error-correcting (255, 239, 17) RS code with zeros $\{1,2,\cdots,16,24,32,48,64,96,128,129,192\}$. $\SC(8,2)$ is a (255, 231, $\geq17$) code.
\item $\SC(6,1)$ is the subcode of the 6-error-correcting (255, 243, 13) RS code with zeros $\{1,2,\cdots,12,16,32,64,128\}$. $\SC(6,1)$ is a (255, 239, $\geq13$) code.
\item $\SC_{01}(6,1)$ is the subcode of the 6-error-correcting (255,243,13) RS code with zeros $\{0,1,2,\cdots,11,16,32,64,128\}$. $\SC_{01}(6,1)$ is a $(255,239,\geq13)$ code.
\end{enumerate}
\end{example}
\subsection{Properties}
The following properties can be proved for the SRS code $\SC_{zz'}(t,t')$ of length $n=2^m-1$ over GF$(2^m)$.
\begin{prop}
The trace of $\SC_{zz'}(t,t')$ is the binary cyclic code with zero set $Z_{bch}\cup Z'_{rs}$, where $Z'_{rs}\subseteq Z_{rs}$ is the largest possible union of cyclotomic cosets contained in $Z_{rs}$.
\label{sec:properties-1}
\end{prop}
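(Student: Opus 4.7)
The plan is to identify the trace code using Delsarte's theorem~\eqref{eq:1} together with the standard description of defining sets of cyclic codes, their duals, and their subfield subcodes. Since $\SC_{zz'}(t,t')$ is, by construction, a cyclic code over $\twom$ with defining set $T=Z_{rs}\cup Z_{bch}$, and componentwise trace commutes with cyclic shift, $\textrm{Tr}(\SC_{zz'}(t,t'))$ is automatically a binary cyclic code of length $n$, so the task reduces to identifying its defining set.

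I would then compute, in succession, the defining sets of $\SC_{zz'}(t,t')^{\perp}$, of its subfield subcode, and of the binary dual of that subfield subcode. By the standard rules, $\SC_{zz'}(t,t')^{\perp}$ has defining set $\{0,1,\ldots,n-1\}\setminus(-T)$ with $-T=\{-t\bmod n:t\in T\}$; since a binary polynomial has a root set closed under squaring, $\textrm{SS}(\SC_{zz'}(t,t')^{\perp})$ has defining set equal to the smallest union of $2$-cyclotomic cosets mod $n$ containing that complement; and taking the binary dual once more, as required by Delsarte, yields a defining set equal to $\{i : C_i\subseteq T\}$, i.e., the largest union of $2$-cyclotomic cosets contained in $T$. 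Specializing to $T=Z_{rs}\cup Z_{bch}$, every coset $C_i\subseteq Z_{bch}$ is automatically captured since $Z_{bch}$ is itself a union of cyclotomic cosets by construction of a BCH zero set, while any further coset $C_i\subseteq T$ disjoint from $Z_{bch}$ must lie entirely in $Z_{rs}$; the ``RS contribution'' is therefore exactly $Z'_{rs}$ by definition, and the defining set collapses to $Z_{bch}\cup Z'_{rs}$ as claimed.

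The main obstacle I foresee is the careful bookkeeping of the two complements and one negation introduced along the way; a sign error would flip ``largest union inside $T$'' to ``smallest union containing $T$'' and deliver the defining set of $\textrm{SS}(\SC_{zz'}(t,t'))$ instead of its trace. As a safety check I would also verify the easy containment directly via the identity
\[
\textrm{Tr}(\mathbf{c})(\alpha^i)=\sum_{k=0}^{m-1}\bigl(c(\alpha^{i\cdot 2^{-k}\bmod n})\bigr)^{2^k},
\]
which vanishes for every $\mathbf{c}\in\SC_{zz'}(t,t')$ whenever $C_i\subseteq T$; equality would then follow from a dimension count via Delsarte, or from the observation that any strictly larger defining set would force common roots not shared by all trace codewords.
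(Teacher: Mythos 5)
Your proposal is correct and follows essentially the same route as the paper, which simply cites Delsarte's theorem~\eqref{eq:1} together with the standard defining-set description of subfield subcodes of cyclic codes (MacWilliams--Sloane, Ch.~7, Problem~33); your write-up just fills in the dual/subfield-subcode/dual bookkeeping that the paper delegates to that reference. Your observation that any cyclotomic coset contained in $Z_{rs}\cup Z_{bch}$ either lies wholly in $Z_{bch}$ (since $Z_{bch}$ is coset-closed) or wholly in $Z_{rs}$ correctly closes the one subtle point in identifying the resulting zero set as $Z_{bch}\cup Z'_{rs}$.
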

\begin{IEEEproof}
This follows from Delsarte's theorem (\ref{eq:1}) and \cite[Chap 7 (Problem 33)]{MacWilliams:1977rw} 
\end{IEEEproof}
Thus, by Proposition \ref{sec:preliminary-results-1}, we see that when a codeword of the binary image of $\SC_{zz'}(t,t')$ is written down as a $n\times m$ matrix, each column will belong to the $t'$-error-correcting binary BCH code. When $z=1$, the trace will be equal to the BCH code in most practically relevant cases. However, when $z=0$, the trace will be the even-weight subcode of the $t'$-error-correcting BCH code. 

We now state a simple result about the subfield subcode of $C_{zz'}(t,t')$. This result is useful in finding the exact minimum distance of SRS codes.
\begin{prop}
The subfield subcode of the SRS code $\SC_{zz'}(t,t')$ is the binary cyclic code of length $n$ with zero set $\bigcup_{s\in Z_{rs}\cup Z_{bch}}C_{s}$. If $z=z'=1$, the subfield subcode is the $t$-error correcting BCH code with zeros $\bigcup_{s\in Z_{rs}}C_{s}$.
\label{sec:properties-2}
\end{prop}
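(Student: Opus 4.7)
The plan is to use the standard characterization of subfield subcodes of cyclic codes: if $\SC$ is a cyclic code of length $n$ over $\twom$ with zero set $Z$, then $\subs$ is the binary cyclic code whose zero set is the closure of $Z$ under the Frobenius action $i \mapsto 2i \pmod{n}$, i.e., $\bigcup_{s\in Z}C_s$. Specializing this to $Z=Z_{rs}\cup Z_{bch}$ gives the first assertion directly.

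To establish that characterization, I would first note that $\SC_{zz'}(t,t')$, being the RS subcode with zero set $Z_{rs}\cup Z_{bch}$, consists of all $c(x)\in\twom[x]/(x^n-1)$ such that $c(\alpha^i)=0$ for every $i\in Z_{rs}\cup Z_{bch}$. A binary polynomial $c(x)$ lies in $\subs$ precisely when it satisfies these zero conditions while having coefficients in $\two$. The crucial observation is the Frobenius identity: for $c(x)\in\two[x]$,
\[
c(\alpha^i)^2 \;=\; c(\alpha^{2i}),
\]
so $c(\alpha^i)=0$ forces $c(\alpha^{2i})=c(\alpha^{4i})=\cdots=0$, i.e., vanishing at $\alpha^i$ is equivalent to vanishing on the entire cyclotomic coset $C_i$. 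Hence a binary word is in $\subs$ iff it vanishes on $\bigcup_{s\in Z_{rs}\cup Z_{bch}}C_s$. Conversely, any binary cyclic code with this zero set sits inside $\SC_{zz'}(t,t')$ since $Z_{rs}\cup Z_{bch}$ is contained in the declared zero set.

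For the second assertion, take $z=z'=1$, so that $Z_{rs}=\{1,2,\ldots,2t\}$ and $Z_{bch}=\bigcup_{j=1}^{2t'-1}C_j$. Because $t'\leq t$, we have $\{1,2,\ldots,2t'-1\}\subseteq\{1,2,\ldots,2t\}=Z_{rs}$, which gives $Z_{bch}\subseteq \bigcup_{s\in Z_{rs}}C_s$. Consequently,
\[
\bigcup_{s\in Z_{rs}\cup Z_{bch}} C_s \;=\; \bigcup_{s\in Z_{rs}} C_s \;=\; \bigcup_{s=1}^{2t} C_s,
\]
which is by definition the zero set of the narrow-sense $t$-error-correcting binary BCH code of length $n$. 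Plugging this back into the first part yields the second statement.

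I do not expect a substantive obstacle here; the argument is essentially a direct application of the Frobenius/cyclotomic-coset characterization of subfield subcodes of cyclic codes, plus a containment check of coset unions for the BCH specialization. The only subtlety worth being careful about is not invoking Delsarte's theorem (\ref{eq:1}) in the wrong direction: while one could alternatively derive the result by dualizing Proposition \ref{sec:properties-1}, the Frobenius argument above is shorter and avoids tracking complementary zero sets.
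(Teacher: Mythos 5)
Your argument is correct and is essentially the paper's proof spelled out: the paper simply cites the standard cyclotomic-coset characterization of subfield subcodes of cyclic codes (MacWilliams--Sloane, Ch.~7, Problem 33), which is exactly what you establish via the Frobenius identity before specializing to $z=z'=1$. The only nitpick is that with $z'=1$ the paper's definition gives $Z_{bch}=\bigcup_{j=1}^{2t'}C_j$ rather than $\bigcup_{j=1}^{2t'-1}C_j$, but since $C_{2t'}=C_{t'}$ for binary cosets the two sets coincide and your containment check goes through unchanged.
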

\begin{IEEEproof}
This follows from \cite[Chap 7 (Problem 33)]{MacWilliams:1977rw}.
\end{IEEEproof}
As an example, consider the (255, 239, $\geq13$) code $\SC(6,1)$ over GF(256). The trace of the code is the length-255 binary Hamming code. The subfield subcode is the 6-error-correcting length-255 binary BCH code with exact minimum distance 13 \cite{Augot:1992cl}. Hence, $\SC(6,1)$ is a (255, 239, 13) code over GF(256). The (255,239,$\geq13$) code $\SC_{01}(6,1)$ over GF(256) has trace equal to the even-weight subcode of the length-255 binary Hamming code. Note that the minimum distance of the trace of $\SC_{01}(6,1)$ is 4.

Table \ref{tab:SRS} summarizes the parameters for some SRS codes that could have possible applications in practice. 
\begin{table*}[htb]
  \centering
  \begin{tabular}{|c|c|c|c|c|c|}
    \hline
    $\SC_{zz'}(t,t')$&$(n,k)$&$n-k+1$&$d$&$d'$&$Z_{bch}\cup Z_{rs}$\\
    \hline
    $\SC_{01}(6,1)$&(255,239)&17&13&4&$\{1,2,4,8,16,32,64,128\}\cup\{0,1,2,\cdots,11\}$\\
    $\SC_{01}(8,1)$&(255,235)&21&17&4&$\{1,2,4,8,16,32,64,128\}\cup\{0,1,2,\cdots,15\}$\\
    $\SC_{11}(16,1)$&(255,221)&35&33&3&$\{1,2,4,8,16,32,64,128\}\cup\{1,2,3,\cdots,32\}$\\
    $\SC_{01}(17,1)$&(255,219)&37&35&4&$\{1,2,4,8,16,32,64,128\}\cup\{0,1,2,\cdots,33\}$\\
    \hline
    $\SC_{11}(8,2)$&(255,231)&25&17&5&$\{1,2,\cdots,64,128,3,6,12,\cdots,192,129\}\cup\{1,2,\cdots,16\}$\\
    $\SC_{11}(16,2)$&(255,217)&39&33&5&$\{1,2,\cdots,64,128,3,6,12,\cdots,192,129\}\cup\{1,2,\cdots,32\}$\\
    \hline
  \end{tabular}
  \caption{Parameters of SRS codes with $n=255$ over GF(256)}
  \label{tab:SRS}
\end{table*}

The first four codes in Table \ref{tab:SRS} (with $t'=1$) meet the generalized Hamming weight bound and are MDS under the trace constraint. In general, for $n=255$, $m=8$, $d'=3$ and $d=2t+1$, the zero set works out to be $Z=\{1,2,4,8,16,32,64,128\}\cup\{1,2,3,\cdots,2t\}$. Hence, $|Z|=8+2t-(\lfloor\log_2(2t)\rfloor+1$ and $d=n-k+\lfloor\log_2(d-1)\rfloor-6$. Similarly, for $d'=4$, we get $d=n-k+\lfloor\log_2(d-2)\rfloor-6$. Therefore, the SRS codes have minimum distances close to the Singleton bound, particularly as $d$ increases. For both $d'=3$ and $d'=4$, these codes can be easily shown to meet the GHW bound. 

When the additional trace structure of SRS codes is used in the decoding, SRS codes turn out to be good competitors to RS codes offering good trade-offs between coding gain and complexity.
\section{List Decoders and Error-Correcting Properties}
\label{sec:decoders-srs-codes}
Since the minimum distance of the SRS code $\SC(t,t')$ of length $n=2^m-1$ symbols over GF($2^m$) is $2t+1$ in most cases, algebraic bounded distance decoding does not appear to be promising. Also, algebraically the trace operator is difficult to handle in a Berlekemp-Massey-like decoder based on simplifying power sums by Newton's identities. However, utilizing the structure of the image in a list decoder is beneficial as described below. Using the intuition gained from list decoders, we propose several soft decoders in later sections. 

Though an SRS code has a lesser minimum distance than an equal-rate RS code in many cases of interest, simple list decoders can be designed to correct a significant fraction of errors above half the minimum distance. In this section, we introduce and study list decoders for SRS codes, primarily as a means for studying the error-correcting capability of SRS codes. 
\subsection{List decoders}
Consider the SRS code $\SC(t,t')$ over $\twom$. As seen before, every codeword of the binary image of $\SC(t,t')$ can be written down as a $n\times m$ matrix with each column belonging to the $t'$-error-correcting binary BCH code. 

The proposed list decoder works as follows. The input to the decoder is the $n\times m$ matrix $R$ of received bits. Let $\underline{R}_i$ denote the $i$-th column of $R$. The first block of the decoder is a bounded-distance decoder for the $t'$-error correcting binary BCH code of length $n$. The BCH decoder runs on each column $\underline{R}_i$, $1\leq i\leq m$. The output of the $i$-th BCH decoder is denoted $\hat{\underline{R}}_i$. In case of decoder failure, $\hat{\underline{R}}_i=\underline{R}_i$. Let $\hat{R}$ denote the $n\times m$ matrix whose $i$-th column is $\hat{\underline{R}}_i$. The next step in the decoding is performed by a bank of $L$ $t$-error-correcting bounded-distance RS decoders. The $i$-th decoder ($1\leq i\leq L$) is parametrized by a set $S_i$, which is a subset of $\{1,2,\cdots,m\}$. The input to the $i$-th RS decoder is a $n\times m$ matrix whose $j$-th column is $\hat{\underline{R}}_j$ if $j\in S_i$ or $\underline{R}_j$ if $j\notin S_i$ ($1\leq j\leq m$). The matrix is converted to a $n\times1$ vector over $\text{GF}(2^m)$ for decoding by the $i$-th RS decoder.

Note that the set $S_i$ specifies the columns that are decoded by the $t'$-error-correcting binary BCH decoder before input to the $i^{th}$ RS decoder. Different RS decoders in the second step are parametrized by different $S_i$. The output from the $L$ RS decoders forms the list of possible codewords. The maximum list size is seen to be $2^m$.
\subsection{Analysis of the list decoder}
We devise a counting algorithm to calculate the fraction of weight-$w$ errors correctable by $\SC(t,t')$ using the proposed list decoder with list size set as $2^m$. For $w\leq t$, the fraction is 1. The calculation is done for $w>t$.

Let $P_m(w)$ denote the set of partitions of $w$ into not more than $m$ parts. Let $p$ be the partition given by $w=w_1+w_2+\cdots+w_l$ where $w_1\geq w_2\geq \cdots\geq w_l$. The numbers $w_1, w_2,\ldots, w_l$ denote the number of bit errors affecting $l$ out of the $m$ columns of the $n\times m$ codeword matrix. Equivalently, we can think of $w_1,w_2,\ldots,w_l$ as the weights of $l$ out of the $m$ columns of the $n\times m$ binary error matrix $E$.

For a given partition $p\equiv w_1+w_2+\cdots+w_l$ of $w$, an ensemble of error patterns $\mathcal{E}(p)$ exists with the column weight distribution $\{w_1,w_2,\ldots,w_l\}$. The size of the set $\mathcal{E}(p)$ is seen to be
$$\vert\mathcal{E}(p)\vert=\frac{l!}{n_1!n_2!\cdots n_r!} \binom{m}{l}\prod^{l}_{i=1}\binom{n}{w_i},$$
where $r$ is the number of distinct weights in the set of weights $\{w_1,w_2,\ldots,w_l\}$, and $n_i$ is the number of times the $i$-th distinct weight occurs in the set of weights. For instance, if the set of weights is $\{4,3,3,1,1\}$, then $r=3$, $n_1=1$, $n_2=2$, and $n_3=2$.

Thus, the fraction of correctable errors for weight $w$, denoted $f_w$ is given by
$$f_w=\frac{\sum_{p}P_c(p)\vert\mathcal{E}(p)\vert}{\binom{nm}{w}},$$
where $P_c(p)$ is the probability that an error vector with column weight distribution $p$ is correctable. 

To determine $P_c(p)$, the partitions in $P_m(w)$ are modified by deleting the parts that are lesser than $t'$ to account for the BCH decoder. Since the list size is $2^m$, there exists an RS decoder parametrized by the set of columns corresponding to the parts in $p$ of weight less than $t'$. For example, let $t'=1$ and $w=9$. Let $p$ be the partition given by $9=4+3+1+1$; $p$ is modified as $\hat{p}$ given by $\hat{p}\equiv4+3$. Hence, a suitable RS decoder will see an error matrix with column weight distribution $\hat{p}$. Each partition in $P_m(w)$ is modified in a similar way to form a set $\hat{P}_m(w)$. Let $\hat{p}$ be given by $\hat{p}\equiv w_1+w_2+\cdots+w_k$. The sum $\hat{w}=w_1+w_2+\cdots+w_k$ need not be equal to $w$; it is less than or equal to $w$. Based on the modified partition $\hat{p}$, we have four different cases. 
\begin{enumerate}
\item If $\hat{p}$ is empty, it implies that all elements in the partition $p$ were $\leq t'$. A suitable RS decoder will output the correct codeword, and $P_c(p)=1$.
\item If $\hat{w}\leq t$, then whatever way errors are distributed along different columns, the total number of rows affected cannot exceed $t$. A suitable RS decoder will output the correct codeword, and $P_c(p)=1$. 
\item If $w_1>t\geq t'$, then more than $t$ rows will be in error for all RS decoders. By the bounded-distance property, we assume that such error patterns can never be corrected, and $P_c(p)=0$.
\item If $\hat{p}$ does not fall into any of the above three categories, the error pattern may or may not be correctable depending on how the errors are distributed along the columns. For this case, a more detailed analysis is necessary. In this case, $0<P_c(p)<1$.
\end{enumerate}

For Case 4 above, the computation of $P_c(p)$ is done as follows. An error matrix $E\in \mathcal{E}(p)$ for $\hat{p}\equiv w_1+w_2+\cdots+w_k$ is modeled by a discrete random process that involves $k$ steps. The $i$-th step corresponds to the random placement of $w_i$ ones in one of the $m$ columns. Let $\{Y_1,Y_2,\ldots,Y_k\}$ be a sequence of discrete random variables, where $Y_i$ denotes the total number of nonzero rows of $E$ after the $i$-th step. For instance, $Y_1$ denotes the number of nonzero rows of $E$ after the first step, which will be $w_1$ with probability $1$. $Y_2$ denotes the number of nonzero rows after the second step. $Y_2$ takes values from $w_1$ to $(w_1+w_2)$ with different probabilities. The probability mass function (PMF) of $Y_2$ can be determined from the PMF of $Y_1$ and the value $w_2$. Similarly, we can find the PMFs of all the random variables $Y_1$ to $Y_k$ starting from the PMF of $Y_1$ and the values $w_1,w_2,\ldots,w_k$. Finally,
$$P_c(p)=\text{Prob}\{Y_k\leq t\}.$$

Fig. \ref{fig:analysis} shows a comparison of the 8-error-correcting (255, 239, 17) RS code ($\SC(8)$) over GF(256) and the (255, 239, 13) SRS code ($\SC(6,1)$) over GF(256). 
\begin{figure}[htb]
  \centering
  \subfigure[D$L$: list decoder of size $L$]{
  \includegraphics[width=0.4\textwidth]{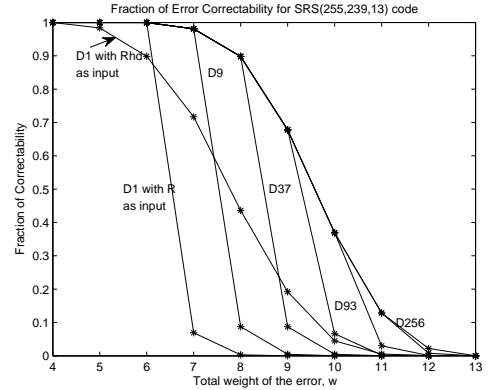}}
  \subfigure[Block-error rate plot]{
  \includegraphics[width=0.45\textwidth]{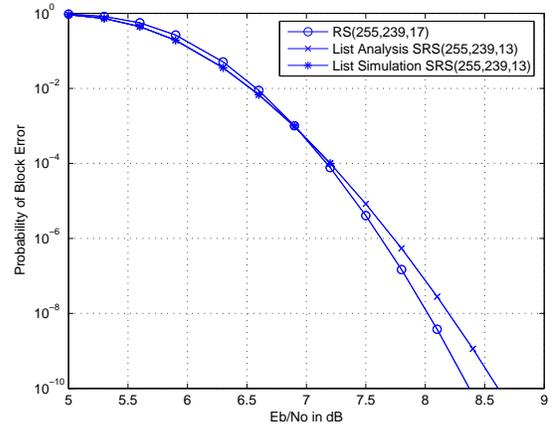}}
  \caption{Comparison of $\SC(6,1)$ and $\SC(8)$ over GF(256).}
  \label{fig:analysis}
\end{figure}
The list decoder was simulated over an AWGN channel with hard-decision decoding to verify the analysis. We see that the analysis matches with the simulated list decoder in the block-error rate plot, and the SRS code is competitive with the RS code of same rate down to a block-error rate of $10^{-10}$.

Notice that the list decoder D256 (see Fig. \ref{fig:analysis}(a)) corrects a significant fraction of weight-7, 8 and 9 errors though the minimum distance of the code is 13. It is interesting to note that D1 fails to correct some weight-6 errors because of errors in the Hamming decoders in the first step of decoding. An important factor in successful decoding is the choice of suitable columns of the received vector that need to be decoded by a Hamming decoder. We propose to use soft information from the channel for making suitable decisions in the first stage and develop practical decoders for SRS codes. 
\section{Soft-input Decoders}
\label{sec:soft-input-decoders}
Because of the special structure of SRS codes, several suboptimal soft decoders of varying complexity are possible. We propose three types of soft-input decoders of increasing complexity. The codes $\SC_{11}(6,1)$ and $\SC_{01}(6,1)$ are compared with $\SC(8)$ over GF(256) in our simulations. Soft decoders for other codes yield similar gains.

We assume BPSK modulation ($0\rightarrow+1,1\rightarrow-1$) over an AWGN channel with variance $\sigma^2$. The standard $Q$ function, defined as $Q(x)=\frac{1}{\sqrt{2\pi}}\int_{x}^{\infty}e^{\frac{-x^2}{2}}dx$, is used in describing the decoder. 

For a SRS code $\SC(t,t')$ of length $n=2^m-1$ over GF$(2^m)$, the received information $R$ is a $n\times m$ real-valued matrix and let $R_{i,j}$ denotes the value in the $i$-th row and $j$-th column of $R$. The proposed soft-input decoders work in two stages. The first stage decodes the columns of $R$ according to the trace code. We restrict ourselves to $d'=3$ (Hamming code) and $d'=4$ (even-weight subcode of Hamming code) for simplicity. The second stage decodes the output of the first stage according to the $t$-error-correcting RS code over GF$(2^m)$.
\subsection{Soft-guided decoders}
We begin with a low-complexity soft-input decoder, which we call a soft-guided decoder. In the first stage of a soft-guided decoder for SRS codes, hard-decision syndromes for the trace code (Hamming or its even-weight subcode) are computed for each of the $m$ columns of $R$. Depending on the trace code, the following possibilities occur:
\begin{enumerate}
\item $d'=3$: If the syndrome for the $i$-th column is non-zero and indicates an error in location $e$ and $|R_{e,i}|<\Delta$, the location is confirmed to be in error; otherwise, the location is assumed to be error-free. The threshold, denoted $\Delta$, is heuristically chosen to satisfy
\begin{multline*}
\binom{n}{1}p(1-p)^{n-1}\frac{Q\left(\frac{1+\Delta}{\sigma}\right)}{Q\left(\frac{1}{\sigma}\right)}=\\
\binom{n}{2}p^2(1-p)^{n-2}\frac{Q\left(\frac{1-\Delta}{\sigma}-Q\left(\frac{1}{\sigma}\right)\right)}{1-Q\left(\frac{1}{\sigma}\right)},
\end{multline*}
which equates the (approximate) probabilities of single errors resulting in no confirmation to double errors resulting in erroneous confirmation.   
\item$d'=4$: In this case, we can detect double errors. If the syndrome for the $i$-th column is non-zero and indicates a double error, no error locations are confirmed. If the syndrome indicates an error in location $e$ and $|R_{e,i}|<\Delta$, the location is confirmed to be in error. The threshold $\Delta$ is chosen to satisfy
\begin{multline*}
\binom{n}{1}p(1-p)^{n-1}\frac{Q\left(\frac{1+\Delta}{\sigma}\right)}{Q\left(\frac{1}{\sigma}\right)}=\\
\binom{n}{3}p^3(1-p)^{n-3}\frac{Q\left(\frac{1-\Delta}{\sigma}-Q\left(\frac{1}{\sigma}\right)\right)}{1-Q\left(\frac{1}{\sigma}\right)},
\end{multline*}
which equates the (approximate) probabilities of single errors resulting in no confirmation to triple errors resulting in erroneous confirmation.   
\end{enumerate}
Hard decisions are made on $R$, and the confirmed error locations are flipped. The output is a single $n\times m$ binary matrix. Note that several other similar suboptimal first stages with varying complexity can be designed.

The second stage involves one $t$-error-correcting bounded-distance RS decoder working on the output of the first stage. The performance of the soft-guided decoder is shown in Fig. \ref{fig:ber2}. We see that the performance of a simple soft-guided decoder for the SRS code is comparable to that of the hard-decision decoder for the MDS RS code at the same rate. Notice that the code $\SC_{01}(6,1)$ performs marginally better than $\SC_{11}(6,1)$ because of the identification of double errors.
\begin{figure}[!h]
	\centering
	\includegraphics[angle=0,width=0.45\textwidth]{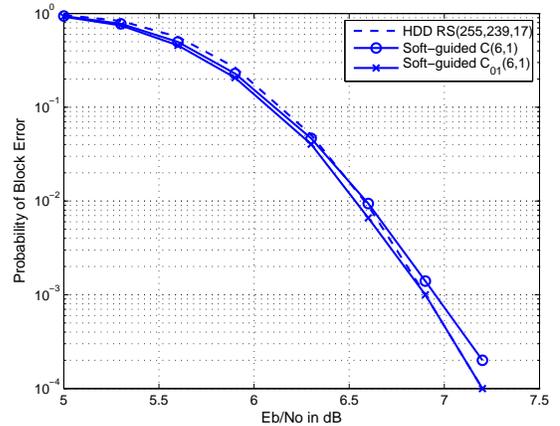}
	\caption{Performance of soft-guided decoder.}
	\label{fig:ber2}
\end{figure}
\subsection{Hybrid decoders}
In hybrid soft-input decoders, the first stage is an optimal soft decoder for the trace code. An efficient implementation for bitwise-MAP decoders for Hamming codes and their even-weight subcodes can be found in \cite{Ashikhmin:2004wv}\cite{Thangaraj:2006qr}. The complexity of these decoders is $O(n\log n)$, where $n$ is the blocklength. These decoders are implementable in hardware through transformations such as the Walsh-Hadamard transform. We skip the details of the implementation. 

In the first stage, an efficient MAP decoder is run on each column of $R$ to obtain log-likelihood ratios (LLRs) for each bit conditioned on the received values in the corresponding column (for a bit in the $i$-th column, the received values in $\underline{R}_i$ are used). 

After the first stage, hard decisions are made on the LLRs to obtain a single $n\times m$ binary matrix. The second stage is a $t$-error-correcting bounded-distance RS decoder. We readily see that the complexity of the first stage in hybrid decoders is higher than that of soft-guided decoders.

The performance of hybrid decoders is shown in Fig. \ref{fig:ber3}. We see that the hybrid decoders provide a coding gain of more than 0.6 dB over hard-decision decoders of MDS RS codes at the same rate. We also notice that additional gain is obtained by using $\SC_{01}(6,1)$ with $d'=4$. The gain is about 0.7 dB at a block error rate of $10^{-3}$.
\begin{figure}[!h]
	\centering
	\includegraphics[angle=0,width=0.45\textwidth]{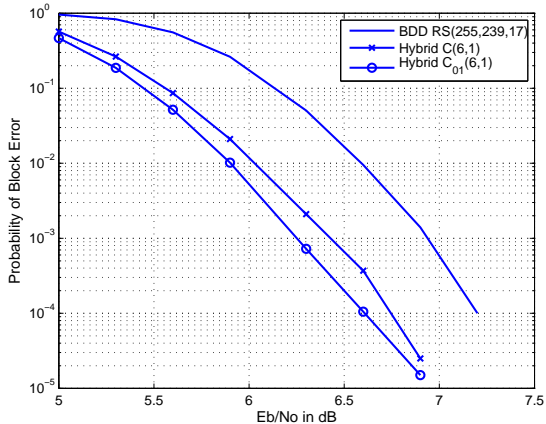}
	\caption{Performance of hybrid decoder.}
	\label{fig:ber3}
\end{figure}

When compared to the bit-level GMD algorithm \cite{Jiang:2008cr}, the hybrid decoder appears to be simpler in complexity. The soft processing in bit-level GMD involves sorting, which is comparable to the complexity of evaluating the Walsh-Hadamard transform. However, the hybrid decoder uses the traditional Berlekemp-Massey bounded-distance decoder only once, while the bit-level GMD employs the Koetter-Vardy (KV) soft-input decoder for RS codes iteratively.

A weakness of the hybrid decoder is that bounds for very low block error rates are difficult to prove, unlike the bit-level GMD. The error-correcting capability of SRS codes under hard-decision list decoding, as depicted in Fig. \ref{fig:analysis}, seems to suggest that the performance of hybrid decoders should extend to lower block error rates as well. 
\subsection{Soft decoders}
We call the most complex among the proposed soft-input decoders as simply soft decoders. In the first stage, we employ efficient implementations of the optimal bitwise MAP-decoders for the trace (similar to hybrid decoders). In the second stage, the Koetter-Vardy (KV) soft-input decoder for RS codes presented in \cite{Koetter:2003ph}\cite{W.-J.-Gross:2002mj} is employed. The LLRs obtained after the first stage are converted to suitable inputs to the KV decoder using the methods suggested in \cite{W.-J.-Gross:2002mj}. We skip the details of the implementation, since we closely follow the ideas in \cite{W.-J.-Gross:2002mj} in our simulations. 

The performance of soft decoders is depicted in Fig. \ref{fig:ber4}. 
\begin{figure}[!h]
	\centering
	\includegraphics[angle=0,width=0.45\textwidth]{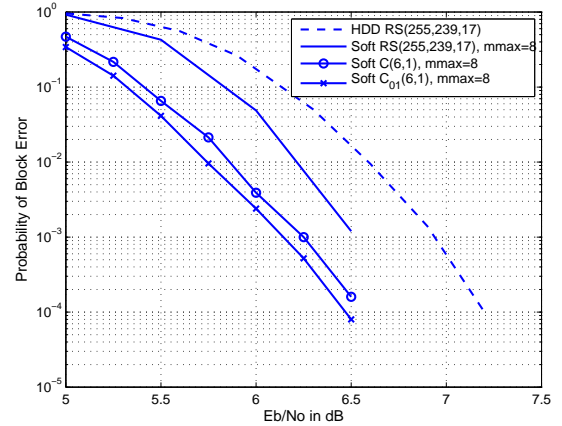}
	\caption{Performance of soft decoder.}
	\label{fig:ber4}
\end{figure}
We see that gains of about 0.8-0.9 dB over comparable hard-decoded RS codes are possible with soft decoders. Gains of about 0.4-0.5 dB are obtained over KV soft decoding of RS codes of same rate. The parameter `mmax' (from \cite{Koetter:2003ph}) indicates the complexity of the second stage. 

The complexity of the proposed soft decoder is roughly comparable to that of the bit-level GMD decoder, provided the iterations of the KV soft decoder (in bit-level GMD) are carefully optimized. The gain of the proposed soft decoder is marginally better than bit-level GMD.

In summary, for the code $\SC_{01}(6,1)$, we observe that soft-guided decoders appear to be similar in performance to MDS RS codes at the same rate. We see gains of about 0.7 dB over hard-decision RS decoders with limited complexity hybrid decoders. For more complex soft decoders, we observe gains of 0.4-0.5 dB over soft KV decoders. 
\section{Conclusion}
\label{sec:conclusion}
In this work, we proposed and studied a new approach for obtaining higher coding gains in situations where traditional Reed-Solomon codes have been used so far - namely, rate about 0.9 over GF(256). The approach suggests the use of a suitably chosen subcode of Reed-Solomon codes. This subcode is characterized by the property that its trace code has a minimum distance larger than 1. Using the properties of the trace and image, we showed that additional coding gain can be obtained by efficiently processing soft values. Gains of about 0.7-0.8 dB are possible over bounded-distance decoders of traditional RS codes with low complexity soft decoders such as the proposed hybrid decoder. When compared to other soft decoders for RS codes in the literature, a gain of 0.4-0.5 dB is possible with the proposed soft decoder for SRS codes.  

This work demonstrates the practical utility of studying the properties of trace and image of codes over non-binary fields. Several avenues are possible for extending this study both from a theoretical and practical viewpoint.
\bibliographystyle{IEEEtran}
\bibliography{docdblatex}
\end{document}